\let\l@ENGLISH\l@english
\newcommand{\her}{\hat{\varepsilon}_{\text{\tiny R}}}
\newcommand{\esr}{\varepsilon_{\text{\tiny SR}}}
\newcommand{\esd}{\varepsilon_{\text{\tiny SD}}}
\newcommand{\crd}{C_{\text{\tiny RD}}}
\newcommand{\Xs}{X_{\text{\tiny S}}}
\newcommand{\Xr}{X_{\text{\tiny R}}}
\newcommand{\ysr}{Y_{\text{\tiny SR}}}
\newcommand{\tysr}{\tilde{Y}_{\text{\tiny SR}}}
\newcommand{\hysr}{\hat{Y}_{\text{\tiny SR}}}
\newcommand{\ysd}{Y_{\text{\tiny SD}}}
\newcommand{\yrd}{Y_{\text{\tiny RD}}}
\newcommand{\yd}{Y_{\text{\tiny D}}}
\theoremstyle{plain}
\newtheorem{theorem}{Theorem}
\newcommand{\btheo}{\begin{theorem}}
\newcommand{\etheo}{\end{theorem}}
\newcommand{\bproof}{\begin{proof}}
\newcommand{\eproof}{\end{proof}}
\newtheorem{definition}[theorem]{Definition}
\newcommand{\bdefi}{\begin{definition}}
\newcommand{\edefi}{\end{definition}}
\newtheorem{fact}[theorem]{Fact}
\newcommand{\bprop}{\begin{fact}}
\newcommand{\eprop}{\end{fact}}
\newtheorem{corollary}[theorem]{Corollary}
\newcommand{\bcor}{\begin{corollary}}
\newcommand{\ecor}{\end{corollary}}
\newtheorem{example}[theorem]{Example}
\newcommand{\bex}{\begin{example}}
\newcommand{\eex}{\end{example}}
\newtheorem{lemma}[theorem]{Lemma}
\newcommand{\blemma}{\begin{lemma}}
\newcommand{\elemma}{\end{lemma}}
\newtheorem{remark}[theorem]{Remark}
\newcommand{\bremark}{\begin{remark}}
\newcommand{\eremark}{\end{remark}}
\newtheorem{conj}[theorem]{Conjecture}
\newcommand{\bconj}{\begin{conj}}
\newcommand{\econj}{\end{conj}}
\begin{document}
\title{A New Coding Paradigm\\ for the Primitive Relay Channel}

\author{Marco~Mondelli, S.~Hamed~Hassani, and~R\"{u}diger~Urbanke%
\thanks{M. Mondelli is with the Institute of Science and Technology (IST) Austria, Klosterneuburg, Austria (e-mail: marco.mondelli@ist.ac.at).

S. H. Hassani is with the Department of Electrical and Systems Engineering, University of Pennsylvania, USA
(e-mail: hassani@seas.upenn.edu).

R. Urbanke is with the School of Computer and Communication Sciences,
EPFL, CH-1015 Lausanne, Switzerland
(e-mail: ruediger.urbanke@epfl.ch).}
}

\maketitle

\begin{abstract}
We consider the primitive relay channel, where the source sends a message to the relay and to the destination, and the relay helps the communication by transmitting an additional message to the destination via a separate channel. Two well-known coding techniques have been introduced for this setting: decode-and-forward and compress-and-forward. In decode-and-forward, the relay completely decodes the message and sends some information to the destination; in compress-and-forward, the relay does not decode, and it sends a compressed version of the received signal to the destination using Wyner-Ziv coding. In this paper, we present a novel coding paradigm that provides an improved achievable rate for the primitive relay channel. The idea is to combine compress-and-forward and decode-and-forward via a chaining construction. We transmit over pairs of blocks: in the first block, we use compress-and-forward; and in the second block, we use decode-and-forward. More specifically, in the first block, the relay does not decode, it compresses the received signal via Wyner-Ziv, and it sends only part of the compression to the destination. In the second block, the relay completely decodes the message, it sends some information to the destination, and it also sends the remaining part of the compression coming from the first block. By doing so, we are able to strictly outperform both compress-and-forward and decode-and-forward. Note that the proposed coding scheme can be implemented with polar codes. As such, it has the typical attractive properties of polar coding schemes, namely, quasi-linear encoding and decoding complexity, and error probability that decays at super-polynomial speed. As a running example, we take into account the special case of the erasure relay channel, and we provide a comparison between the rates achievable by our proposed scheme and the existing upper and lower bounds.
\end{abstract}

\section{Introduction}

% general relay channel
The relay channel, introduced by van der Meulen in \cite{Me71}, represents the simplest network model with a single source and a single destination. The source wants to communicate with the destination, and the relay helps the communication. More specifically, let $\Xs$ be the signal sent by the source to the relay and to the destination, $\ysr$ the signal received by the relay, $\Xr$ the signal sent by the relay to the destination, and $\yd$ the signal received by the destination which comes from the source and from the relay. Note that the relay channel has a broadcast component going from the source to the relay and to the destination, and a multiple access component going from the source and from the relay to the destination. The model is schematized in Figure \ref{fig:genrelay}.

Cover and El Gamal provided a general upper bound (the cut-set bound) and two lower bounds (decode-and-forward and compress-and-forward) in \cite{CG79}. Since that seminal work, several lower bounds have been derived, i.e., amplify-and-forward, compute-and-forward, noisy network coding, quantize-map-and-forward, hybrid coding, see \cite{ScGa00, ADT11, NG11, LKGC11, MiLiKi11}. The cut-set bound is tight in most of the settings where capacity is known \cite{CG79, Za05, GA82, Kim08}. However, the cut-set bound was shown not be tight in some special cases \cite{Zh88, ARY09}, and novel upper bounds tighter than cut-set were recently presented in \cite{Xue14, WOX16, WO16, WO15}. For a review on the relay channel, see also \cite[Chapter 16]{kim:nit} and \cite[Chapter 9]{Kr07}.

\begin{figure}[t]
\centering
\psset{arrowscale=1}
\psset{unit=0.5cm}
\psset{xunit=1,yunit=1}
\begin{pspicture}(0,0)(16,5)
%\psgrid[subgriddiv=0,griddots=10,gridlabels=7pt]
\psframe(0,0)(3,1.5)
\rput[c](1.5,0.75){Source}
\rput[c](4,1.25){$\Xs$}
\rput[c](11,1.25){$\yd$}
\rput[c](7.5,0.75){$p_{\ysr, \yd \mid \Xr,\Xs}$}
\psline[linecolor=black,linewidth=0.7pt]{->}(3,0.75)(5,0.75)
\psline[linecolor=black,linewidth=0.7pt]{->}(10,0.75)(12,0.75)
\psline[linecolor=black,linewidth=0.7pt]{->}(6.5,1.5)(6.5,3.5)
\psline[linecolor=black,linewidth=0.7pt]{<-}(8.5,1.5)(8.5,3.5)
\psframe(6,3.5)(9,5)
\rput[c](6,2.5){$\ysr$}
\rput[c](9,2.5){$\Xr$}
\rput[c](7.5,4.25){Relay}
\psframe(0,0)(3,1.5)
\psframe(5,0)(10,1.5)
\rput[c](14,0.75){Destination}
\psframe(12,0)(16,1.5)
\end{pspicture}
\caption{General relay channel.}
\label{fig:genrelay}
\end{figure}
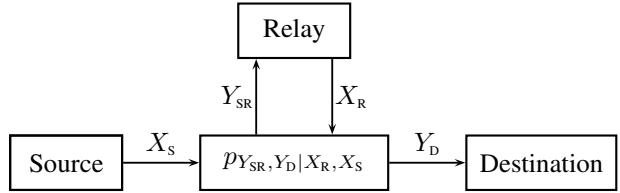

\begin{figure}[t]
\centering
\psset{arrowscale=1}
\psset{unit=0.5cm}
\psset{xunit=1,yunit=1}
\begin{pspicture}(0,0)(16,5)
%\psgrid[subgriddiv=0,griddots=10,gridlabels=7pt]
\psframe(0,0)(3,1.5)
\rput[c](1.5,0.75){Source}
\rput[c](4,1.25){$\Xs$}
\rput[c](11,1.25){$\ysd$}
\rput[c](7.5,0.75){$p_{\ysr, \ysd \mid \Xs}$}
\psline[linecolor=black,linewidth=0.7pt]{->}(3,0.75)(5,0.75)
\psline[linecolor=black,linewidth=0.7pt]{->}(10,0.75)(12,0.75)
\psline[linecolor=black,linewidth=0.7pt]{->}(6.5,1.5)(6.5,3.5)
\psline[linecolor=black,linewidth=0.7pt]{->}(8,4.25)(10,3)
\psline[linecolor=black,linewidth=0.7pt]{->}(13,2.75)(14.5,1.5)
\rput[c](14.2,2.5){$\yrd$}
\psframe(10,2.25)(13,3.75)
\rput[c](11.5,3){$p_{\yrd \mid \Xr}$}
\psframe(5,3.5)(8,5)
\rput[c](6,2.5){$\ysr$}
\rput[c](9,4.25){$\Xr$}
\rput[c](6.5,4.25){Relay}
\psframe(0,0)(3,1.5)
\psframe(5,0)(10,1.5)
\rput[c](14,0.75){Destination}
\psframe(12,0)(16,1.5)
\end{pspicture}
\caption{Primitive relay channel: relay channel with orthogonal receiver components.}
\label{fig:primrelay}
\end{figure}
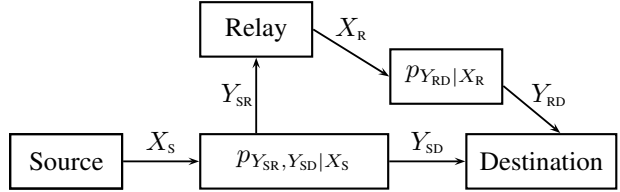

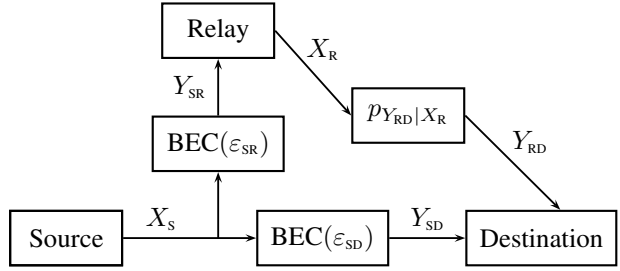
\begin{figure}[t]
\centering
\psset{arrowscale=1}
\psset{unit=0.5cm}
\psset{xunit=1,yunit=1}
\begin{pspicture}(0,0)(16,7)
%\psgrid[subgriddiv=0,griddots=10,gridlabels=7pt]
\psframe(0,0)(3,1.5)
\rput[c](1.5,0.75){Source}
\rput[c](4,1.25){$\Xs$}
\rput[c](11,1.25){$\ysd$}
\rput[c](8.25,0.75){BEC$(\esd)$}
\psline[linecolor=black,linewidth=0.7pt]{->}(3,0.75)(6.5,0.75)
\psline[linecolor=black,linewidth=0.7pt]{->}(10,0.75)(12,0.75)
\psline[linecolor=black,linewidth=0.7pt]{->}(5.5,4)(5.5,5.5)
\psline[linecolor=black,linewidth=0.7pt]{->}(5.5,0.75)(5.5,2.5)
\psline[linecolor=black,linewidth=0.7pt]{->}(7,6.25)(9,4)
\psline[linecolor=black,linewidth=0.7pt]{->}(12,4)(14.5,1.5)
\rput[c](13.7,3.3){$\yrd$}
\psframe(9,3.25)(12,4.75)
\rput[c](10.5,4){$p_{\yrd \mid \Xr}$}
\psframe(4,5.5)(7,7)
\rput[c](5.5,3.25){BEC$(\esr)$}
\psframe(3.75,2.5)(7.25,4)
\rput[c](4.75,4.75){$\ysr$}
\rput[c](8.25,5.75){$\Xr$}
\rput[c](5.5,6.25){Relay}
\psframe(0,0)(3,1.5)
\psframe(6.5,0)(10,1.5)
\rput[c](14,0.75){Destination}
\psframe(12,0)(16,1.5)
\end{pspicture}
\caption{The erasure relay channel: primitive relay channel in which the link from source to relay is a BEC$(\esr)$ and the link from source to destination is a BEC$(\esd)$.}
\label{fig:erasrelay}
\end{figure}

Polar codes, introduced by Ar{\i}kan in \cite{Ari09}, have been employed to devise practical schemes for the relay channel. In particular, for the case of the degraded relay channel where $\Xs \to (\Xr, \ysr) \to \yd$ forms a Markov chain, polar coding techniques for decode-and-forward are presented in \cite{ARTKS10, MK12, BSTARK12, KPK14}. Further, for the case of the relay channel with orthogonal receiver components, a polar coding scheme for compress-and-forward is proposed in \cite{BSTARK12}. For general relay channels, polar coding techniques for decode-and-forward and compress-and-forward are described in \cite{Wa15}. We will adopt these schemes as primitives in our approach. Soft decode-and-forward relaying strategies which employ LDPC codes are considered in \cite{BSC14}.
  
% primitive relay channel
In this work, we consider the relay channel with orthogonal receiver components, which is also known as the primitive relay channel. The difference with respect to the general relay channel consists in the fact that the destination receives two separate signals: $\ysd$ from the source and $\yrd$ from the relay. Basically, the multiple access component going from the source and from the relay to the destination is substituted by two parallel channels. Furthermore, we assume that the relay  can listen and transmit simultaneously, namely, it is full-duplex. The model is schematized in Figure \ref{fig:primrelay}. Note that the relay communicates with the destination via a direct link. Thus, the relay can communicate reliably to the destination at a rate arbitrarily close to capacity by using a capacity achieving code (e.g., a random code or a polar code). Consequently, we can just assume that the relay and the destination are connected via a noiseless link of given capacity. Even in this simplified setting, the capacity of the primitive relay channel is unknown in general. A review on coding scheme for the primitive relay channel is contained in \cite{kall07}.  

% our contribution
The main contribution of this paper is a novel coding scheme that combines compress-and-forward with decode-and-forward and improves upon both of them. The idea is to consider pairs of blocks and use a chaining construction: in the first block, we perform a variation of compress-and-forward where the relay sends only a part of the compressed signal to the destination; in the second block, we perform decode-and-forward and the relay sends to the destination the new information bits together with the remaining part of the compressed signal coming from the previous block. The idea of chaining was first presented in \cite{HRunipol} to design universal codes and in \cite{SaV13} to guarantee strong security for the degraded wiretap channel. Since then, it has been employed in numerous other settings, such as, the broadcast channel \cite{MHSU14broad-ieeeit, ChBl16}, the asymmetric channel \cite{MHU14asymm-ieeeit, EKLJB16}, and the wiretap channel \cite{WeUl16}. We highlight that our proposed coding paradigm is implementable with codes used for compress-and-forward and decode-and-forward. Thus, polar codes are an appealing choice \cite{Wa15}: they have an encoding and decoding complexity of $\Theta(n\log n)$ and a block error probability scaling roughly as $2^{-\sqrt{n}}$, where $n$ is the block length. 

The rest of the paper is organized as follows. In Section \ref{sec:prel}, we provide a review of existing upper bounds (cut-set and its improvements) and lower bounds (direct transmission, decode-and-forward, partial decode-and-forward, compress-and-forward, and partial decode-compress-and-forward). These bounds are also evaluated for the special case of the erasure relay channel, which serves as a running example throughout the paper. In Section \ref{sec:mainres}, we state and prove our new lower bound. In Section \ref{sec:perfo}, we present some numerical results for the erasure relay channel: we compare the rates achieved by our proposed coding scheme with existing upper and lower bounds. Some concluding remarks are provided in Section \ref{sec:concl}.

\section{Existing Upper and Lower Bounds}\label{sec:prel}

We assume that all channels are binary memoryless and symmetric (BMS). We denote by $h_2(x) = -x \log_2 x - (1-x)\log_2(1-x)$ the binary entropy function and by ${\mathcal X}_{\text{\tiny S}}$, ${\mathcal X}_{\text{\tiny R}}$, ${\mathcal Y}_{\text{\tiny SR}}$, and ${\mathcal Y}_{\text{\tiny SD}}$ the alphabets associated to $\Xs$, $\Xr$, $\ysr$, and $\ysd$, respectively. We define $a \circ b = a + b(1-a)$ for any $a, b \in {\mathbb R}$. 

% The capacity of the channel from the source to the destination is given by $\csd = \max_{p_{\Xs}} I(\Xs; \ysd)$. Similarly, the capacity of the channel from the source to the relay is given by $\csr = \max_{p_{\Xs}} I(\Xs; \ysr)$.

Throughout the paper, we will use as a running example the special case of the erasure relay channel. As schematized in Figure \ref{fig:erasrelay}, in the erasure relay channel the links between source and destination and between source and relay are binary erasure channels (BECs) with erasure probabilities $\esd$ and $\esr$, respectively.

%We will follow the lead of \cite{kim:nit} for the generic information theoretic bounds and of \cite{kall07} for the bounds concerning the case of orthogonal receiver components (primitive relay channel). 

%\emph{Notation.} Let us denote by $h_2(x) = -x\log_2 x-(1-x)\log_2 (1-x)$ the binary entropy function. Define $a \circ b = a + b(1-a)$ for any $a, b \in {\mathbb R}$.

\subsection{Cut-Set Upper Bound}

For the \emph{general relay channel}, the cut-set upper bound on the achievable rate $R$ is given by \cite[Theorem 16.1]{kim:nit}
\begin{equation}\label{eq:cutsetgen}
R \le \max_{p_{\Xs, \Xr}} \min \{I(\Xs, \Xr; \yd) ; I(\Xs; \ysr, \yd | \Xr) \}.
\end{equation}

For the case of the \emph{primitive relay channel}, the cut-set bound specializes to \cite[Proposition 1]{kall07}
\begin{equation}\label{eq:cutsetprim}
R \le \max_{p_{\Xs}} \min \{I(\Xs; \ysd) + \crd ; I(\Xs; \ysr, \ysd) \}.
\end{equation}

For the special case of the \emph{erasure relay channel}, the cut-set bound can be rewritten as 
\begin{equation}\label{eq:cutseterase}
R \le \min \{1-\esd + \crd ; 1-\esr\esd \}.
\end{equation}

\subsection{Improvements on Cut-Set Upper Bound}\label{subsec:improvementcutset}

For the case of the \emph{primitive relay channel}, an upper bound demonstrating an explicit gap to the cut-set bound was presented in \cite{Xue14}. Furthermore, two new upper bounds that are generally tighter than cut-set are proposed in \cite{WOX16} for the symmetric primitive relay channel, in which $\ysr$ and $\ysd$ are conditionally identically distributed given $\Xs$. The results of \cite{WOX16} are extended to the non-symmetric case and to the Gaussian case in \cite{WO16} and \cite{WO15}, respectively.

Let us now state the result in \cite[Theorem 3.1]{WO16}, which provides an extension of the first bound of \cite{WOX16}. If a rate $R$ is achievable, then there exists some $p_{\Xs}(x_{\text{\tiny S}})$ and $a\ge 0$ such that
\begin{equation}\label{eq:ayfer1}
\left\{ \begin{array}{l}
R\le I(\Xs; \ysr, \ysd), \\
\\
R\le I(\Xs; \ysd)+\crd-a, \\
\\
R \le I(\Xs; \ysd, \tysr)+ h_2\left(\sqrt{\displaystyle\frac{a\ln 2}{2}}\right) + \sqrt{\displaystyle\frac{a\ln 2}{2}}\log_2 (|\mathcal Y_{\text{\tiny SR}}|-1) - a,
\end{array}
\right.
\end{equation}
for any random variable $\tysr$ with the same conditional distribution as $\ysr$ given $\Xs$. The evaluation of the term $I(\Xs; \ysd, \tysr)$ that gives the tightest bound is simple in the following special cases: 
\begin{enumerate}
\item \emph{Symmetric} ($\ysr$ and $\ysd$  are conditionally identically distributed given $\Xs$): $I(\Xs; \ysd, \tysr) = I(\Xs; \ysd)$.
\item \emph{Degraded} ($\ysd$ is a  stochastically degraded version of $\ysr$): $I(\Xs; \ysd, \tysr) = I(\Xs; \ysr)$.
\item \emph{Reversely degraded} ($\ysr$ is a stochastically degraded version of $\ysd$): $I(\Xs; \ysd, \tysr) = I(\Xs; \ysd)$.
\end{enumerate}

For the special case of the erasure relay channel, the bound can be re-written as
\begin{equation}\label{eq:imprcutseterase}
\begin{split}
R &\le \max_{a \ge 0}\min \Biggl\{ 1-\esr \esd,  1-\esd + \crd-a,
1-\min\{\esr, \esd\} + h_2\left(\sqrt{\displaystyle\frac{a\ln 2}{2}}\right) + \sqrt{\displaystyle\frac{a\ln 2}{2}}- a\Biggr\}.
\end{split}
\end{equation}

In order to present the second bound of \cite{WOX16}, we need some preliminary definitions. Given a channel transition probability $p(\omega | x)$, for any $p(x)$ and $d \ge 0$, we define $\Delta(p(x), d)$ as
\begin{equation}
\begin{split}
\Delta(p(x), & d) = \max_{\tilde{p}(\omega| x)}\bigl( H(\tilde{p}(\omega| x)| p(x)) + D(\tilde{p}(\omega| x) || p(\omega| x) | p(x)) - H(p(\omega| x) | p(x))\bigr),
\end{split}
\end{equation}
subject to the condition 
\begin{equation}
\frac{1}{2} \sum_{(x, \omega)} |p(x)\tilde{p}(\omega| x) - p(x)p(\omega| x)| \le d,
\end{equation}
where $D(\tilde{p}(\omega| x) || p(\omega| x) | p(x))$ is the conditional relative entropy defined as 
\begin{equation}
D(\tilde{p}(\omega| x) || p(\omega| x) | p(x)) = \sum_{(x, \omega)} p(x)\tilde{p}(\omega| x) \log_2 \frac{\tilde{p}(\omega| x)}{p(\omega| x)},
\end{equation}
$H(\tilde{p}(\omega| x)| p(x))$ is the conditional entropy defined with respect to the joint distribution $p(x)\tilde{p}(\omega| x)$, i.e., 
\begin{equation}
H(\tilde{p}(\omega| x)| p(x)) = -\sum_{(x, \omega)} p(x)\tilde{p}(\omega| x) \log_2 \tilde{p}(\omega| x),
\end{equation}
and $H(p(\omega| x) | p(x))$ is the conditional entropy similarly defined with respect to $p(x)p(\omega| x)$. At this point, we can state the result in \cite[Theorem 4.2]{WOX16}. If a rate $R$ is achievable, then there exists some $p_{\Xs}(x_{\text{\tiny S}})$ and $a\in[0, \min\{\crd, H(\ysr\mid \Xs)\}]$ such that
\begin{equation}\label{eq:ayfer2}
\left\{ \begin{array}{l}
R\le I(\Xs; \ysr, \ysd), \\
R\le I(\Xs; \ysd)+\crd-a, \\
R \le I(\Xs; \ysd)+ \Delta\left(p_{\Xs}(x_{\text{\tiny S}}), \sqrt{\displaystyle\frac{a\ln 2}{2}}\right).
\end{array}
\right.
\end{equation}
As pointed out at the end of Section IV.C of \cite{WOX16}, for the special case of the symmetric erasure relay channel, we have that $\Delta(p_{\Xs}(x_{\text{\tiny S}}), d) = \infty$ for all $p_{\Xs}(x_{\text{\tiny S}})$ and $d>0$. Thus, \eqref{eq:ayfer2} reduces to the cut-set bound \eqref{eq:cutseterase}.

\subsection{Direct Transmission Lower Bound}

In the direct transmission, the source communicates with the destination by using an optimal point-to-point code. The relay transmission is fixed at the most favorable symbol for the channel from the source to the destination.

For the \emph{general relay channel}, direct transmission allows to achieve the following rate \cite[Section 16.3]{kim:nit}:
\begin{equation}\label{eq:directtrans}
R_{\rm DT} = \max_{p_{\Xs}, x_{\text{\tiny R}}} I(\Xs; \yd | \Xr = x_{\text{\tiny R}}).
\end{equation}

For the case of the \emph{primitive relay channel}, the direct transmission lower bound specializes to 
\begin{equation}\label{eq:dtr}
R_{\rm DT} = \max_{p_{\Xs}} I(\Xs ; \ysd).
\end{equation}
Note that the direct transmission lower bound \eqref{eq:dtr} meets the cut-set upper bound \eqref{eq:cutsetprim} and it equals the capacity of the primitive relay channel when either of the following two conditions holds:
\begin{enumerate}
\item the primitive relay channel is reversely degraded, which implies that $I(\Xs; \ysd) = I(\Xs; \ysr, \ysd)$;
\item $\crd=0$.
\end{enumerate}

For the special case of the \emph{erasure relay channel}, the direct transmission lower bound can be rewritten as
\begin{equation}\label{eq:dtrerase}
R_{\rm DT} = 1-\esd.
\end{equation}
The direct transmission lower bound \eqref{eq:dtrerase} meets the cut-set upper bound \eqref{eq:cutseterase} and it equals the capacity of the erasure relay channel when either $1-\esd = 1-\esr\esd$ or $\crd =0$.

\subsection{Decode-and-Forward Lower Bound}

In decode-and-forward, the relay completely decodes the received sequence and cooperates with the source to communicate the message to the destination. 

For the \emph{general relay channel}, decode-and-forward allows to achieve the following rate \cite[Theorem 16.2]{kim:nit}:
\begin{equation}\label{eq:df}
R_{\rm DF} = \max_{p_{\Xs, \Xr}} \min \{I(\Xs, \Xr; \yd), I(\Xs; \ysr|\Xr)\}.
\end{equation}

For the case of the \emph{primitive relay channel}, the decode-and-forward lower bound specializes to \cite[Proposition 2]{kall07} 
\begin{equation}\label{eq:DForth}
R_{\rm DF} = \max_{p_{\Xs}} \min \{I(\Xs; \ysd) + \crd ; I(\Xs; \ysr) \}.
\end{equation}
Note that the decode-and-forward lower bound \eqref{eq:DForth} meets the cut-set upper bound \eqref{eq:cutsetprim} and is equal to the capacity of the primitive relay channel when either of the following two conditions holds:
\begin{enumerate}
\item the primitive relay channel is degraded, which implies that $I(\Xs; \ysr) = I(\Xs; \ysr, \ysd)$;
\item $I(\Xs; \ysr) \ge I(\Xs; \ysd) + \crd$.
\end{enumerate}

For the special case of the \emph{erasure relay channel}, the decode-and-forward lower bound can be rewritten as 
\begin{equation}\label{eq:dfrelay}
R_{\rm DF} = \min \{1-\esd + \crd ; 1-\esr \}.
\end{equation}
The decode-and-forward lower bound \eqref{eq:dfrelay} meets the cut-set upper bound \eqref{eq:cutseterase} and it equals the capacity of the erasure relay channel when either $1-\esr = 1-\esr\esd$ or $1-\esd + \crd \le 1-\esr$.
 
\subsection{Partial Decode-and-Forward Lower Bound}
 
In partial decode-and-forward, the relay decodes and sends to the destination only part of the received sequence. 

For the \emph{general relay channel}, partial decode-and-forward allows to achieve the following rate \cite[Theorem 16.3]{kim:nit}:
\begin{equation}\label{eq:pdfgen}
\begin{split}
R_{\rm pDF} = \max_{p_{U, \Xs, \Xr}} \min \{&I(\Xs, \Xr; \yd), I(U; \ysr|\Xr)+I(\Xs; \yd|\Xr, U)\},
\end{split}
\end{equation}
where the cardinality of the alphabet associated to $U$ can be bounded as $|\mathcal U|\le|{\mathcal X}_{\text{\tiny S}}|\cdot|{\mathcal X}_{\text{\tiny R}}|$. Note that $U$ is an auxiliary random variable that represents the part of the message decoded by the relay. By taking $U=\Xs$, we recover the decode-and-forward lower bound \eqref{eq:df}. Furthermore, by taking $U = \emptyset$, we recover the direct transmission lower bound \eqref{eq:directtrans}.

Note that the partial decode-and-forward lower bound \eqref{eq:pdfgen} meets the cut-set upper bound \eqref{eq:cutsetgen} when the relay channel has orthogonal sender components, namely, the broadcast channel from the source to the relay and the destination is decoupled into two parallel channels. 

For the case of the \emph{primitive relay channel}, the partial decode-and-forward lower bound specializes to \cite[Eq. (5)]{kall07} 
\begin{equation}\label{eq:pdf}
\begin{split}
R_{\rm pDF} = \max_{p_{U, \Xs}} \min \{&I(\Xs; \ysd) + \crd , I(U; \ysr)+I(\Xs; \ysd| U)\},
\end{split}
\end{equation} 
with $|\mathcal U|\le|{\mathcal X}_{\text{\tiny S}}|$.

For the special case of the \emph{erasure relay channel}, we show that partial decode-and-forward does not provide any improvement upon both direct transmission and decode-and-forward. After some simple calculations, one obtains that
\begin{equation}
\begin{split}
I(\Xs; \ysd) &= H(\Xs) - H(\Xs|\ysd) = H(\Xs)(1-\esd),\\
I(U; \ysr) &= H(U) - H(U |\ysr)\\
&= H(U)  - \esr H(U) -(1-\esr )H(U|\Xs)\\
&= (1-\esr )(H(\Xs)-H(\Xs|U)),\\
I(\Xs; \ysd |U) &= H(\Xs|U)- H(\Xs|U, \ysd) \\
&= H(\Xs|U)(1-\esd).  
\end{split}
\end{equation}
Hence, by setting $\alpha = H(\Xs)$ and $\beta = H(\Xs |U)$, we can re-write \eqref{eq:pdf} as
\begin{equation}
\begin{split}
R_{\rm pDF} &= \max_{0\le \beta\le\alpha\le 1} \min \{\alpha(1-\esd) + \crd , \alpha (1-\esr) +\beta(\esr-\esd)\}\\
&= \max_{0\le \beta\le 1} \min \{(1-\esd) + \crd , (1-\esr) +\beta(\esr-\esd)\}.
\end{split}
\end{equation}
On the one hand, if $\esr \ge \esd$, then the maximum is achieved by taking $\beta = 1$, and $R_{\rm pDF} = 1-\esd = R_{\rm DT}$. On the other hand, if $\esr \le \esd$, then the maximum is achieved by taking $\beta = 0$, and $R_{\rm pDF} = \min \{(1-\esd) + \crd , 1-\esr\} = R_{\rm DF}$. Consequently, no improvement is possible over both direct transmission and decode-and-forward.

\subsection{Compress-and-Forward Lower Bound}

In compress-and-forward, the relay does not attempt to decode the received sequence, but it sends a (possibly compressed) description of it, denoted by $\hysr$, to the destination. Since this description is correlated with the sequence received by the destination from the source, Wyner-Ziv coding is used to reduce the rate needed to communicate it to the destination.

For the \emph{general relay channel}, compress-and-forward allows to achieve the following rate \cite[Theorem 16.4]{kim:nit}:
\begin{equation}
\begin{split}
R_{\rm CF} = \hspace*{-1em}\max_{p_{\Xs}p_{\Xr}p_{\hysr|\Xr, \ysr}}\hspace*{-2em} \min \{&I(\Xs, \Xr; \yd) - I(\ysr; \hysr| \Xs, \Xr, \yd), I(\Xs; \hysr, \yd|\Xr)\},
\end{split}
\end{equation}
where the cardinality of the alphabet associated to $\hysr$ can be bounded as $|\hat{\mathcal Y}_{\text{\tiny SR}}|\le |{\mathcal X}_{\text{\tiny R}}|\cdot |{\mathcal Y}_{\text{\tiny SR}}| +1$. This expression can be equivalently rewritten as \cite[Remark 16.3]{kim:nit}
\begin{equation}
\begin{split}
R_{\rm CF} = \max_{p_{\Xs}p_{\Xr}p_{\hysr|\Xr, \ysr}} \{&I(\Xs; \hysr, \yd|\Xr) : I(\ysr; \hysr|\Xr, \yd) \le I(\Xr; \yd)\}. 
\end{split}
\end{equation}
The bound is in general not convex, therefore it can be improved via time sharing.

For the case of the \emph{primitive relay channel}, the compress-and-forward lower bound specializes to \cite[Proposition 3]{kall07}
\begin{equation}\label{eq:cforth}
R_{\rm CF} = \hspace*{-1em}\max_{p_{\Xs}p_{\hysr| \ysr}}\hspace*{-1em} \{I(\Xs; \hysr, \ysd) :
  I(\ysr; \hysr|\ysd)\le \crd\},
\end{equation}
with $|\hat{\mathcal Y}_{\text{\tiny SR}}|\le  |{\mathcal Y}_{\text{\tiny SR}}| +1$.

Note that the compress-and-forward lower bound \eqref{eq:cforth} meets the cut-set upper bound \eqref{eq:cutsetprim} and it equals the capacity of the primitive relay channel when $H(\ysr |\ysd) \le \crd$. Indeed, in this case, we can pick $\hysr= \ysr$, namely, the relay performs Slepian-Wolf source coding. Therefore, $R_{\rm CF} =I(\Xs; \ysr, \ysd)$, which is one of the two terms in the cut-set bound.

On the contrary, if $H(\ysr |\ysd) > \crd$, then we can degrade $\ysr$ into $\hysr$, namely, the relay performs a step of lossy source coding. The relay transmits this lossy description to the destination that can decode it successfully since $\hysr$ requires less bits than $\ysr$. However, after that the destination has recovered $\hysr$, there is a penalty loss: we can achieve rates up to $I(\Xs; \hysr, \ysd)$, instead of up to $I(\Xs; \ysr, \ysd)$.

For the case of the \emph{erasure relay channel}, we have that 
\begin{equation}\label{eq:cferase1}
H(\ysr|\ysd) = h_2(\esr) +\esd(1-\esr).
\end{equation} 
Hence, if $\crd \ge h_2(\esr) +\esd(1-\esr)$, then the compress-and-forward lower bound meets the cut-set upper bound and it equals the capacity of the erasure relay channel. 

On the contrary, if $\crd < h_2(\esr) +\esd(1-\esr)$, it is not easy to find the best choice of $\hysr$ even for this simple scenario. Following \cite{BSC14}, let us assume that $\hysr$ is the output of an erasure-erasure channel (EEC) with erasure probability $\her$ and input $\ysr$. This means that if $\ysr=?$, then $\hysr=?$ with probability $1$; if $\ysr\in \{0, 1\}$, then $\hysr=?$ with probability $\her$ and $\hysr=\ysr$ with probability $1-\her$. Consequently, 
\begin{equation}\label{eq:cferase2}
\begin{split}
I(\Xs; \hysr, \ysd) &= H(\Xs) - H(\Xs| \hysr, \ysd) \\
&= H(\Xs) (1- (\her \circ \esr)\cdot \esd).
\end{split}
\end{equation} 
Clearly, $I(\Xs; \hysr, \ysd)$ is maximized by setting $p_{\Xs}$ to the uniform distribution. Furthermore,
\begin{equation}\label{eq:cferase3}
\begin{split}
H(\hysr |\ysr, \ysd) &= H(\hysr |\ysr) = (1-\esr)h_2(\her),\\
H(\hysr | \ysd) &= h_2(\esr \circ \her)+ \esd(1-\esr \circ \her).
\end{split}
\end{equation}
As a result, the rate \eqref{eq:cforth} can be rewritten as 
\begin{equation}\label{eq:cferasebound}
\begin{split}
R_{\rm CF} &= \max_{0 \le \her \le 1} \{ 1- (\her \circ \esr)\cdot \esd : h_2(\esr \circ \her) + \esd(1-\esr \circ \her) - (1-\esr)h_2(\her) \le 
\crd\}.
\end{split}
\end{equation}

% Maybe here, we can try to see what would be the optimal $\hysr$. In principle, I have no idea on how to solve this.

\subsection{Partial Decode-Compress-and-Forward Lower Bound}

In partial decode-compress-and-forward, the relay decodes and sends to the destination part of the source message, and it also sends to the destination a compressed description of the remaining signal by Wyner-Ziv coding.

For the \emph{general relay channel}, partial decode-compress-and-forward allows to achieve the following rate \cite[Theorem 7]{CG79}:
\begin{equation}\label{eq:pDCFrate}
\begin{split}
R_{\rm pDCF} = \max &\min\{I(\Xs; \hysr, \yd | U, \Xr) + I(U; \ysr | V, \Xr), I(\Xs, \Xr; \yd) - I(\ysr; \hysr | U, \Xs, \Xr, \yd)\},
\end{split}
\end{equation}
where the maximization is taken over all the joint probability density functions of the form
\begin{equation}\label{eq:pDCFmax}
\begin{split}
p_{U, V, \Xs, \Xr, \ysr, \hysr, \yd} = &p_V p_{U|V} p_{\Xs | U} p_{\Xr|V}  \cdot p_{\ysr, \yd | \Xs, \Xr} p_{\hysr| \Xr, \ysr, U}
\end{split}
\end{equation}
such that
\begin{equation}
I(\Xr; \yd | V) \ge I(\hysr; \ysr | U, \Xr, \yd).
\end{equation}
Partial decode-compress-and-forward is a generalization of partial decode-and-forward and compress-and-forward. Futhermore, it can strictly improve on both, e.g., for the state-dependent orthogonal relay channel with state information available at the destination \cite{aguerri2016capacity}.

Let us consider the case of the \emph{primitive relay channel} and pick $V = \emptyset$. Then, the partial decode-compress-and-forward lower bound specializes to 
\begin{equation}\label{eq:pDCFsimple}
\begin{split}
R_{\rm pDCF} = \max &\min\{I(\Xs; \hysr, \ysd | U) + I(U; \ysr),\\
& I(\Xs; \ysd) +\crd - I(\ysr; \hysr | U, \Xs)\},
\end{split}
\end{equation}
such that 
\begin{equation}
\crd \ge I(\hysr; \ysr | \ysd, U).
\end{equation}

\section{Main Result}\label{sec:mainres}

We are now ready to state our new lower bound for the primitive relay channel. 

\begin{theorem}\label{th:main}
Consider the transmission over a primitive relay channel, where the source sends $\Xs$ to the relay and the destination, the relay receives $\ysr$ from the source, the destination receives $\ysd$ from the source, and relay and destination are connected via a noiseless link with capacity $\crd$. Furthermore, denote by $\hysr$ the compressed description of $\ysr$ transmitted by the relay, and define $I_{\rm max}=\max\{0, I(\Xs; \ysr)-I(\Xs; \ysd)\}$. 
Then, the following rate is achievable:
\begin{equation}\label{eq:newbound}
R_{\rm new} =
\frac{\big(\crd-I_{\rm max}\big)I(\Xs;\hysr, \ysd)+\max\{I(\Xs; \ysr), I(\Xs; \ysd)\}\big(I(\ysr ; \hysr \mid \ysd)-\crd\big)}{I( \ysr ; \hysr \mid \ysd)-I_{\rm max}} ,
\end{equation}
for any joint distribution $p_{\Xs}p_{\hysr\mid \ysr}$ such that 
\begin{align}
I(\Xs; \ysr) &< I(\Xs; \ysd) + \crd, \label{eq:cond1}\\
I(\ysr ; \hysr|\ysd) & \ge \crd,\label{eq:cond2}
\end{align}
and where $|\hat{\mathcal Y}_{\text{\tiny SR}}|\le  |{\mathcal Y}_{\text{\tiny SR}}| +1$. Furthermore, the rate \eqref{eq:newbound} can be achieved by a polar coding scheme with encoding/decoding complexity $\Theta(n\log n)$ and error probability $O(2^{-n^\beta})$ for any $\beta\in (0, 1/2)$, where $n$ is the block length.
\end{theorem}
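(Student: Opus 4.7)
The plan is to build a block-Markov chaining scheme that uses CF in some blocks and DF (or direct transmission) in the others, with the relay splitting each Wyner--Ziv compression index across two consecutive blocks. Fix a joint distribution $p_{\Xs}p_{\hysr\mid\ysr}$ satisfying \eqref{eq:cond1}--\eqref{eq:cond2}, and set
\[
\alpha \;=\; \frac{\crd-I_{\rm max}}{I(\ysr;\hysr\mid \ysd)-I_{\rm max}}.
\]
By \eqref{eq:cond1} we have $\alpha>0$, and by \eqref{eq:cond2} we have $\alpha\le 1$. Transmit over $N$ blocks of length $n$, of which a fraction $\alpha$ are declared CF-type and $1-\alpha$ DF-type, paired so that each CF block is immediately followed by a DF block (with negligible loss from rationally approximating $\alpha$).

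\textbf{What happens in each block type.} In a CF block, the source encodes a fresh message of rate close to $I(\Xs;\hysr,\ysd)$ using the polar CF primitive of \cite{Wa15}; the relay produces a Wyner--Ziv description $\hysr^n$ whose bin index has rate $I(\ysr;\hysr\mid\ysd)$. By \eqref{eq:cond2} this exceeds $\crd$, so the relay splits the bin index into a \emph{current} piece of rate $\crd$ shipped immediately over the noiseless link, and a \emph{residual} piece of rate $I(\ysr;\hysr\mid\ysd)-\crd$ that is queued for the following DF block. In a DF block, the source encodes a fresh message of rate close to $\max\{I(\Xs;\ysr),I(\Xs;\ysd)\}$: if $I(\Xs;\ysr)>I(\Xs;\ysd)$ the polar DF primitive of \cite{Wa15} is used and the relay forwards $I_{\rm max}$ cooperation bits on the link, while otherwise direct transmission is used and zero cooperation bits are needed; either way the leftover link capacity $\crd-I_{\rm max}$ is exactly what is used to drain the residual piece from the preceding CF block. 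The choice of $\alpha$ is forced by balancing production and drainage: $\alpha(I(\ysr;\hysr\mid\ysd)-\crd)=(1-\alpha)(\crd-I_{\rm max})$.

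\textbf{Decoding and rate.} For each CF--DF pair the destination decodes in reverse order: it first recovers the DF block's message from $\ysd^n$ of that block and the $I_{\rm max}$ cooperation bits; subtracting these from the link output of the DF block isolates the residual Wyner--Ziv bits, which, combined with the current piece already received during the CF block, give the full bin index and hence $\hysr^n$; it then decodes the CF message from $(\hysr^n,\ysd^n)$. The overall rate is
\[
\alpha\,I(\Xs;\hysr,\ysd)+(1-\alpha)\max\{I(\Xs;\ysr),I(\Xs;\ysd)\},
\]
and substituting $\alpha$ yields exactly \eqref{eq:newbound}. The cardinality bound on $\hat{\mathcal Y}_{\text{\tiny SR}}$ is the standard one for compress-and-forward and carries over from \cite[Remark 16.3]{kim:nit}.

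\textbf{Polar implementation and main obstacle.} The CF and DF building blocks are taken from the polar constructions of \cite{Wa15}, both with encoding/decoding complexity $\Theta(n\log n)$ and block error probability $O(2^{-n^\beta})$ for $\beta\in(0,1/2)$; a union bound over the $N$ blocks preserves this scaling. The main technical obstacle is performing the cross-block splicing \emph{inside} the polar codes: the bin-index bits that a CF block wishes to defer live on information indices of its polar code, and the bits of the noiseless link that a DF block wishes to reserve for these deferred bits must be aligned with positions of its own polar code that can absorb them (treating them as common randomness shared between relay and destination in the appropriate polar sub-codes). This is the chaining trick pioneered in \cite{HRunipol,SaV13} and adapted to many settings in \cite{MHSU14broad-ieeeit,ChBl16,MHU14asymm-ieeeit,EKLJB16,WeUl16}; the pre-shared seed needed to prime the first pair of the chain has size $o(n)$, so its rate cost vanishes as $n\to\infty$.
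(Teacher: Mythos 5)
Your proposal takes essentially the same approach as the paper: pair a compress-and-forward block with a decode-and-forward (or direct-transmission) block, have the relay defer the part of the Wyner--Ziv bin index that exceeds the link budget $\crd$ to the following block, balance the residual production against the drain capacity $\crd-I_{\rm max}$, and decode the DF block first and the CF block second. Your parameterization by the time-sharing fraction $\alpha=(\crd-I_{\rm max})/(I(\ysr;\hysr\mid\ysd)-I_{\rm max})$ of equal-length CF blocks is algebraically equivalent to the paper's parameterization by the DF/CF block-length ratio $(I(\ysr;\hysr\mid\ysd)-\crd)/(\crd-I_{\rm max})$, and the two give identical rates.

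Two small inaccuracies in the write-up: (i) the claim that the blocks can be ``paired so that each CF block is immediately followed by a DF block'' is only literally consistent with ``a fraction $\alpha$ of CF blocks'' when $\alpha=1/2$; for general $\alpha$ the residual of one CF block must be drained over several DF blocks (or one DF block drains several CF blocks), and the paper avoids this bookkeeping by simply making the two blocks of a pair have different lengths $n_1$ and $n_2=n_1(I(\ysr;\hysr\mid\ysd)-\crd)/(\crd-I_{\rm max})$. (ii) The ``main technical obstacle'' you identify -- aligning the deferred bin-index bits with polar-code positions in the DF block -- is a non-issue: the relay-to-destination link is noiseless, so the relay just concatenates the residual bin-index bits with the DF cooperation bits as a raw bit string, and no cross-block chaining \emph{inside} the polar transforms is needed. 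The paper's actual polar-implementation concern is that polar block lengths must be powers of two, which constrains the achievable ratio $n_2/n_1$; this is resolved by the punctured polar codes of \cite{HHM17}.
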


\begin{remark}
If \eqref{eq:cond1} does not hold, then decode-and-forward achieves the cut-set bound and it is optimal. Furthermore, if \eqref{eq:cond2} does not hold, then our scheme reduces to compress-and-forward and the achievable rate is given by \eqref{eq:cforth}. As we will see in the proof, we have two slightly different schemes for the cases \it{(i)} $I(\Xs; \ysr) \ge I(\Xs; \ysd)$ and \it{(ii)} $I(\Xs; \ysr) < I(\Xs; \ysd)$. Thus, introducing the term $I_{\rm max}$ allows us to write the achievable rate in a more compact form. 
\end{remark}
%Hence, our scheme reduces to compress-and-forward, as it will be clarified in the proof. Our lower bound concerns a regime in which it is possible to improve upon both decode-and-forward and compress-and-forward, as we will demonstrate in Section \ref{sec:perfo}.

%Let us now consider the special case in which the relay sends to the destination directly $\ysr$, namely, we set $\hysr= \ysr$. Then, in \eqref{eq:newbound} we do not perform the maximization over $p_{\hysr\mid \ysr}$ and we operate the substitutions $I(\Xs; \hysr, \ysd) \to I(\Xs; \ysr, \ysd)$ and $I(\ysr; \hysr\mid \ysd) \to H(\ysr\mid \ysd)$.

\begin{remark}
The proposed scheme can be thought of as a particular form of time-sharing between decode-and-forward and compress-and-forward: in the first block, we are performing (a variant of) compress-and-forward, and in the second block we are performing decode-and-forward. However, we allow different time-sharing strategies across different channels: in the channel from relay to destination, part of the compressed message of the first block is sent together with the message of the second block. This is different from the `classical' way of implementing time-sharing, which can be realized through the partial decode-compress-and-forward scheme, as described for example in \cite{aguerri2016capacity}. In \cite{aguerri2016capacity}, in the same block a part of the message is processed according to the decode-and-forward scheme and the remaining part is processed according to the compress-and-forward scheme. Therefore, it is not clear that the rate achievable by our scheme can also be achieved by partial decode-compress-and-forward. In fact, in the special case considered in the numerical simulations of Section \ref{sec:perfo}, our achievable rate strictly improves upon partial decode-compress-and-forward.
\end{remark}

\begin{remark}
The proposed scheme is based on a chaining construction.  Chaining can be thought of as a form of block Markov encoding, where the joint distribution is over blocks of symbols (instead of being over a single symbol). As described in detail in the proof, at the relay we generate the first block according to a first codebook; we repeat part of the first block into the second block; and we generate the rest of the second block according to a second codebook. Thus, the repetition of part of the first block into the second block can be interpreted as a particular joint distribution over pairs of blocks.
\end{remark}

The special case of the erasure relay channel is handled by the corollary below. 

\begin{corollary}\label{cor:erase}
Consider the transmission over the erasure relay channel, where $\ysd$ is obtained from $\Xs$ via a BEC$(\esd)$, $\ysr$ is obtained from $\Xs$ via a BEC$(\esr)$, $\hysr$ is obtained from $\ysr$ via an EEC$(\her)$, and the relay is connected to the destination via a noiseless link with capacity $\crd$. Then, the rate 
\begin{equation}
\label{eq:becs}
\begin{split}
R_{\rm new} &=\frac{(\crd-\max\{0, \esd-\esr\})(1- (\her \circ \esr)\cdot \esd)}{h_2(\esr \circ \her)+ \esd(1-\esr \circ \her)-(1-\esr)h_2(\her)-\max\{0, \esd-\esr\}} \\
&+ \frac{\max\{1-\esr, 1-\esd\}(h_2(\esr \circ \her)+ \esd(1-\esr \circ \her)-(1-\esr)h_2(\her)-\crd)}{h_2(\esr \circ \her)+ \esd(1-\esr \circ \her)-(1-\esr)h_2(\her)-\max\{0, \esd-\esr\}}
\end{split}
\end{equation}
is achievable for any $\her\in [0, 1]$ such that
\begin{align}
1-\esr &< 1-\esd + \crd, \label{eq:cond1bec}\\
h_2(\esr \circ \her)+ \esd(1-\esr & \circ \her)-(1-\esr)h_2(\her) \ge \crd.\label{eq:cond2bec}
\end{align}
Furthermore, the rate \eqref{eq:becs}
can be achieved by a polar coding scheme with encoding/decoding complexity $\Theta(n\log n)$ and error probability $O(2^{-n^\beta})$ for any $\beta\in (0, 1/2)$, where $n$ is the block length.
\end{corollary}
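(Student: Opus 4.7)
The plan is to obtain Corollary \ref{cor:erase} as a direct specialization of Theorem \ref{th:main} to the erasure setting, so the task reduces to evaluating the mutual-information quantities appearing in \eqref{eq:newbound}, \eqref{eq:cond1}, and \eqref{eq:cond2} under the specific channels described in the corollary, and then invoking Theorem \ref{th:main}. Since the BEC is symmetric, I would first argue (by the usual symmetry/convexity argument for BMS channels) that the uniform input distribution on $\Xs$ simultaneously maximizes $I(\Xs;\ysd)$, $I(\Xs;\ysr)$, and $I(\Xs;\hysr,\ysd)$, so that fixing $p_{\Xs}$ uniform is without loss for the optimization of \eqref{eq:newbound}, and I would fix the test channel $p_{\hysr\mid\ysr}$ to be the EEC$(\her)$ already introduced in the compress-and-forward subsection.

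Next I would assemble the required ingredients. The single-letter quantities $I(\Xs;\ysd)=1-\esd$ and $I(\Xs;\ysr)=1-\esr$ are immediate, which yields
\begin{equation*}
I_{\rm max}=\max\{0,(1-\esr)-(1-\esd)\}=\max\{0,\esd-\esr\},
\end{equation*}
and $\max\{I(\Xs;\ysr),I(\Xs;\ysd)\}=\max\{1-\esr,1-\esd\}$. The two remaining quantities have already been computed in the excerpt: equation \eqref{eq:cferase2} gives $I(\Xs;\hysr,\ysd)=1-(\her\circ\esr)\cdot\esd$, and combining the two lines of \eqref{eq:cferase3} gives
\begin{equation*}
I(\ysr;\hysr\mid\ysd)=h_2(\esr\circ\her)+\esd(1-\esr\circ\her)-(1-\esr)h_2(\her).
\end{equation*}

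The final step is bookkeeping: substitute these expressions into the numerator and denominator of \eqref{eq:newbound} to recover \eqref{eq:becs}, and substitute them into the constraints \eqref{eq:cond1} and \eqref{eq:cond2} to recover \eqref{eq:cond1bec} and \eqref{eq:cond2bec}. The statement about polar encoding/decoding complexity $\Theta(n\log n)$ and error probability $O(2^{-n^\beta})$ is then inherited verbatim from Theorem \ref{th:main}, since the erasure relay channel is a particular BMS instance of the primitive relay channel.

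The only non-routine point is justifying that the uniform $p_{\Xs}$ and the EEC$(\her)$ test channel are the right choices, but both follow from symmetry: the uniform input is optimal for any BMS channel, and restricting $\hysr$ to be an EEC output of $\ysr$ is exactly the parametrization used in the compress-and-forward analysis of the erasure relay channel in \eqref{eq:cferasebound}, so no novelty is needed here. Hence there is no real obstacle; the work is essentially algebraic substitution into the already-proved Theorem \ref{th:main}.
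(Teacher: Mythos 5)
Your proposal is correct and is essentially identical to the paper's proof, which simply invokes Theorem~\ref{th:main} together with the already-derived formulas \eqref{eq:cferase2}--\eqref{eq:cferase3}; the only thing you add is the (unnecessary, and slightly misleading) remark that uniform $p_{\Xs}$ is ``without loss for the optimization'' of \eqref{eq:newbound} --- the corollary makes no optimality claim, it merely asserts achievability for the particular choice of uniform input and EEC$(\her)$ test channel, so Theorem~\ref{th:main} (which holds \emph{for any} $p_{\Xs}p_{\hysr\mid\ysr}$) applies directly without any symmetry argument.
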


The proof of Corollary \ref{cor:erase} easily follows from the application of Theorem \ref{th:main} and of formulas \eqref{eq:cferase2}-\eqref{eq:cferase3}. We will now proceed with the proof of our main result. 

\begin{proof}[Proof of Theorem \ref{th:main}]
We start by presenting the main idea of our scheme. We split the transmission into two blocks. In the first block, we perform a variant of compress-and-forward: the relay does not decode the received sequence, but it sends a compressed description of it to the destination. However, differently from standard compress-and-forward, we require that \eqref{eq:cond2} holds. Hence, we cannot transmit all the compressed description $\hysr$ during the first block. In the second block, we perform decode-and-forward: the relay completely decodes the received sequence. Furthermore, we choose the length of the second block so that the relay can transmit the part of $\hysr$ that was not sent in the previous block plus the new information needed to decode the second block.

Let us now describe this scheme more in detail and provide the achievability proof of the rate \eqref{eq:newbound}. First, we deal with the case $I(\Xs; \ysr) \ge I(\Xs; \ysd)$.

Consider the transmission of the first block. Denote by $n_1$ and $R_1$ the block length and the rate of the message transmitted by the source, and let $R_1$ approach from below $I(\Xs;\hysr, \ysd)$. The relay receives $\ysr$ and constructs the compressed description $\hysr$. Recall that the destination receives the side information $\ysd$ from the source. Hence, by using Wyner-Ziv coding, the destination needs from the relay a number of bits approaching from above $I(\ysr ; \hysr \mid \ysd)\cdot n_1$, in order to decode the message sent by the source. As $I(\ysr ; \hysr|\ysd)  \ge \crd$, the relay transmits right away a number of these bits approaching from below $\crd\cdot n_1$. The number of remaining bits approaches from above $(I(\ysr ; \hysr|\ysd) - \crd)\cdot n_1$ and it is stored by the relay. The destination stores the message received from the relay and the observation $\ysd$ obtained from the source.

Consider the transmission of the second block and define
\begin{equation}\label{eq:alpha}
\alpha = \frac{I(\ysr ; \hysr|\ysd) - \crd}{\crd-\big(I(\Xs ; \ysr) - I(\Xs ; \ysd)\big)}.
\end{equation}
Denote by $n_2$ and $R_2$ the block length and the rate of the message transmitted by the source. Let $n_2=n_1\cdot \alpha$ and let $R_2$ approach from below $I(\Xs;\ysr)$. The relay receives $\ysr$ and successfully decodes the message. Again, the destination receives the side information $\ysd$ from the source. Hence, it needs from the relay a number of bits approaching from above $(I(\Xs ; \ysr) - I(\Xs ; \ysd) ) \cdot n_1\cdot \alpha$, in order to  decode the message sent by the source. The relay transmits to the destination these $(I(\Xs ; \ysr) - I(\Xs ; \ysd) ) \cdot n_1\cdot \alpha$ information bits plus the $(I(\ysr ; \hysr|\ysd) - \crd)\cdot n_1$ bits remaining from the previous block. This transmission is reliable as \eqref{eq:alpha} implies that
\begin{equation}
\begin{split}
\big(I(&\Xs ; \ysr) - I(\Xs ; \ysd) \big) \cdot n_1\cdot \alpha+ \big(I(\ysr ; \hysr|\ysd) - \crd\big)\cdot n_1 = \crd\cdot n_2. 
\end{split}
\end{equation} 
At this point, the destination can reconstruct the second block by using the side information received from the source and the extra $(I(\Xs ; \ysr) - I(\Xs ; \ysd) ) \cdot n_1\cdot \alpha$ bits received from the relay. Furthermore, it can also reconstruct the first block by using the side information previously received from the source and the extra $I(\ysr ; \hysr \mid \ysd)\cdot n_1$ bits received from the relay (partly in the first and partly in the second block). 

The overall block length is $n = n_1 + n_2 = (1+\alpha)n_1$ and the achievable rate is
\begin{equation}
R  = \frac{R_1 +\alpha R_2}{1+\alpha},
\end{equation}
which approaches from below
\begin{equation}\label{eq:approachrate}
\begin{split}
 & \frac{\big(\crd-(I(\Xs ; \ysr) - I(\Xs ; \ysd))\big)I(\Xs;\hysr, \ysd)}{I(\ysr ; \hysr|\ysd)-\big(I(\Xs ; \ysr) - I(\Xs ; \ysd)\big)} + \frac{\big(I(\ysr ; \hysr|\ysd)-\crd\big)I(\Xs ; \ysr)}{I(\ysr ; \hysr|\ysd)-\big(I(\Xs ; \ysr) - I(\Xs ; \ysd)\big)}.\\
\end{split}
\end{equation}
Note that the expression \eqref{eq:approachrate} coincides with \eqref{eq:newbound} when $I(\Xs; \ysr) \ge I(\Xs; \ysd)$. 

The case $I(\Xs; \ysr) < I(\Xs; \ysd)$ is handled in a similar way. As concerns the transmission of the first block, nothing changes. Denote by $n'_1$ and $R'_1$ the block length and the rate of the message transmitted by the source, and let $R'_1$ approach from below $I(\Xs;\hysr, \ysd)$. The relay receives $\ysr$ and constructs the compressed description $\hysr$. By using Wyner-Ziv coding, the destination needs from the relay a number of bits approaching from above $I(\ysr ; \hysr \mid \ysd)\cdot n'_1$, in order to decode the message sent by the source. As $I(\ysr ; \hysr|\ysd)  \ge \crd$, the relay transmits right away a number of these bits approaching from below $\crd\cdot n'_1$. The number of remaining bits approaches from above $(I(\ysr ; \hysr|\ysd) - \crd)\cdot n'_1$ and it is stored by the relay. The destination stores the message received from the relay and the observation $\ysd$ obtained from the source.

As concerns the transmission of the second block, define
\begin{equation}\label{eq:alpha1}
\alpha' = \frac{I(\ysr ; \hysr|\ysd) - \crd}{\crd},
\end{equation}
and denote by $n'_2$ and $R'_2$ the block length and the rate of the message transmitted by the source. Let $n'_2 = n'_1 \cdot \alpha'$ and let $R'_2$ approach from below $I(\Xs; \ysd)$. The relay discards the received message and transmits to the destination the $(I(\ysr ; \hysr|\ysd) - \crd)\cdot n'_1$ bits remaining from the previous block. This transmission is reliable as \eqref{eq:alpha1} implies that
\begin{equation}
\big(I(\ysr ; \hysr|\ysd) - \crd\big)\cdot n'_1 = \crd\cdot n'_2. 
\end{equation} 
At this point, the destination can reconstruct the second block by using the message received from the source. Furthermore, it can also reconstruct the first block by using the side information previously received from the source and the extra $I(\ysr ; \hysr \mid \ysd)\cdot n_1$ bits received from the relay (partly in the first and partly in the second block). 

The overall block length is $n' = n'_1 + n'_2 = (1+\alpha')n'_1$  and the achievable rate is 
\begin{equation}
R' = \frac{R'_1 +\alpha' R'_2}{1+\alpha'},
\end{equation}
which approaches from below
\begin{equation}\label{eq:approachrate2}
 \frac{\crd\cdot I(\Xs;\hysr, \ysd)+\big(I(\ysr ; \hysr|\ysd)-\crd\big)I(\Xs ; \ysd)}{I(\ysr ; \hysr|\ysd)}.
\end{equation}
Note that the expression \eqref{eq:approachrate2} coincides with \eqref{eq:newbound} when $I(\Xs; \ysr) < I(\Xs; \ysd)$. 

Clearly, the coding scheme described so far can be implemented with codes that are suitable for compress-and-forward and for decode-and-forward. Hence, we can employ the polar coding schemes for compress-and-forward and for decode-and-forward presented in \cite{Wa15}. However, polar codes require block lengths $n_1$ and $n_2$ (or $n'_1$ and $n'_2$) that are powers of two, which puts a constraint on the possible values for $\alpha=n_2/n_1$ (or $\alpha'=n'_2/n'_1$). To remove this constraint and achieve the rate \eqref{eq:newbound} for any $\alpha$, it suffices to use the punctured polar codes described in \cite[Theorem 1]{HHM17}.
\end{proof}

\section{Numerical Results}\label{sec:perfo}

Let us consider the special case of the erasure relay channel. In Figure \ref{fig:comparison}, we compare the achievable rate \eqref{eq:becs} of our scheme with the existing upper and lower bounds, i.e., the cut-set upper bound \eqref{eq:cutseterase} (which coincides with the improved bound \eqref{eq:imprcutseterase}), the decode-and-forward lower bound \eqref{eq:dfrelay} and the compress-and-forward lower bound \eqref{eq:cferasebound}. We consider two pairs of choices for $\esd$ and $\esr$: $(\esd, \esr)=(0.85, 0.5)$ for the plot on the left (see Fig.~\ref{fig:comparison}), and $(\esd, \esr)=(0.4, 0.2)$ for the plot on the right. We plot the various bounds as functions of $\crd$. Our scheme outperforms both decode-and-forward and compress-and-forward for an interval of values of $\crd$ in both settings. As $\crd$ increases, the improvement guaranteed by our strategy decreases, until eventually the performance of our scheme is matched by compress-and-forward.

For a general primitive relay channel, it is not immediate how to compare the partial decode-compress-and-forward rate given in \eqref{eq:pDCFrate} with our new rate given in \eqref{eq:newbound} -- the partial decode-compress-and-forward scheme involves 3 auxiliary random variables ($U, V, \hysr$) and the complex joint distribution expressed in \eqref{eq:pDCFmax} to maximize over. Thus, one immediate advantage of our new rate is that it is easier to compute. In fact, the proposed lower bound involves only one auxiliary random variable ($\hysr$). Even if we simplify the partial decode-compress-and-forward rate as in \eqref{eq:pDCFsimple}, the formula remains harder to evaluate (two auxiliary random variables: $U, \hysr$). Although a full optimization over all parameters is very challenging, we have specialized \eqref{eq:pDCFsimple} to the setting of the erasure relay channel and, for fairness of comparison with the other schemes, we have considered the case in which $\hysr$ is obtained from $\ysr$ via an EEC($\her$). Then, by performing the maximization numerically over $\her$ and over all the auxiliary random variables $U$ s.t. $|U|\le 2$, the achievable rate of partial decode-compress-and-forward does not improve upon decode-and-forward and compress-and-forward. Therefore, in this setting, partial decode-compress-and-forward is strictly worse than our proposed scheme.

In \cite{BSC14}, for $\esd=0.85$, $\esr=0.5$ and $\crd=0.99125$, the proposed soft decode-and-forward strategy based on LDPC codes achieves a rate of $0.507$, while both decode-and-forward and compress-and-forward achieve a rate of $0.5$. Our new coding strategy is reliable for rates up to $0.545$, hence it outperforms all existing lower bounds. As a reference, note that in this setting the cut-set bound is $0.575$. 

\begin{figure*}[htb] 
\centering
   \subfloat[$\esd=0.85$ and $\esr=0.5$]{\centering\includegraphics[width=7.5cm]{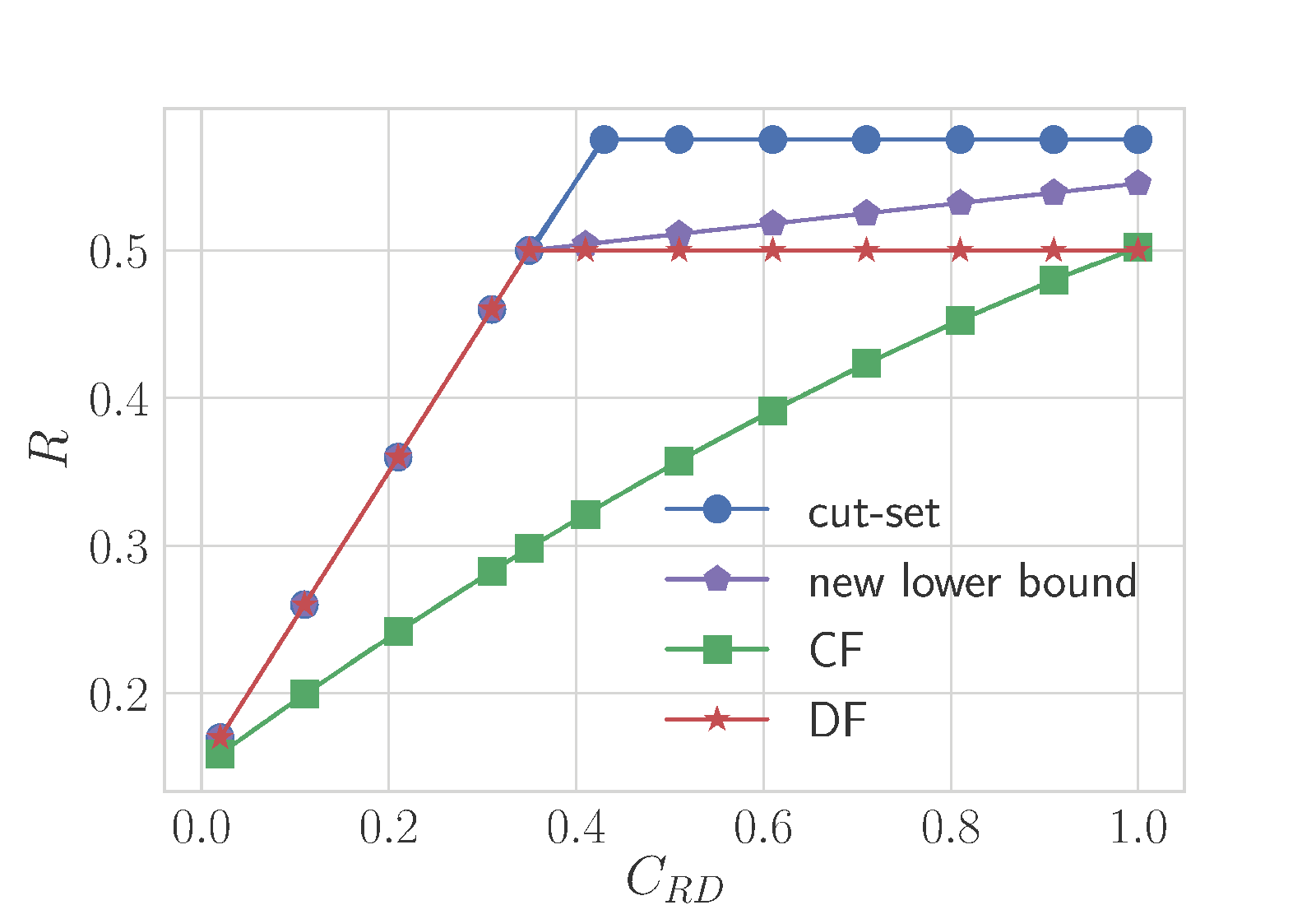}}
   \subfloat[$\esd=0.4$ and $\esr=0.2$]{\centering\includegraphics[width=7.5cm]{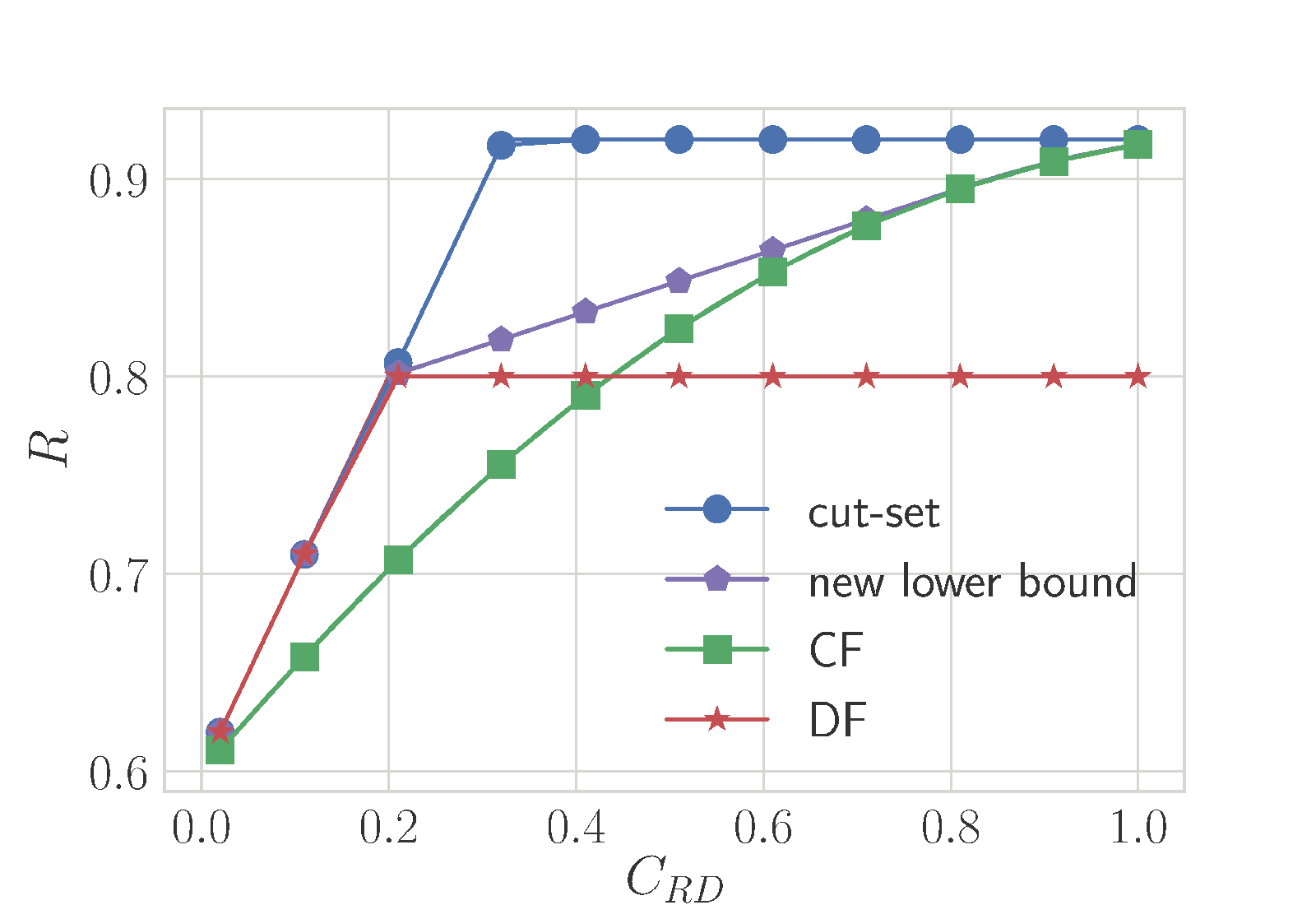}}
\caption{Comparison between the achievable rate provided by our strategy and the existing upper and lower bounds. We use ``CF" and ``DF" as abbreviations for ``compress-and-forward'' and ``decode-and-forward'', respectively.}\label{fig:comparison}
\end{figure*}

\section{Concluding Remarks}\label{sec:concl}

We have proposed a new coding paradigm for the primitive relay channel that combines compress-and-forward and decode-and-forward by means of a chaining construction. The achievable rates obtained by our scheme surpass the state-of-the-art coding approaches (compress-and-forward, decode-and-forward, and the soft decode-and-forward strategy of \cite{BSC14}). Our coding paradigm is general in the sense that we treat decode-and-forward and compress-and-forward as existing primitives. For this reason, any coding scheme that can be used to implement decode-and-forward/compress-and-forward can also be used to implement our new strategy. Polar codes are one notable example, since polar coding schemes for decode-and-forward and compress-and-forward have been developed, see \cite{ARTKS10, MK12, BSTARK12, KPK14, Wa15}. This leads to a scheme with the typical attractive features of polar codes, i.e., quasi-linear encoding/decoding complexity and fast decay of the error probability. A detailed analysis of the finite length performance of polar codes for our strategy (as well as of polar codes for decode-and-forward and compress-and-forward) is an interesting direction for future research.

In the numerical simulations, we consider the special case of the erasure relay channel. In this setting, the upper bounds presented in Section \ref{subsec:improvementcutset} do not provide an improvement over the cut-set bound. An interesting avenue for future work is to study the performance of our strategy in scenarios where the cut-set bound is not tight (e.g., as in \cite{Zh88, ARY09, aguerri2016capacity}). For example, in \cite{aguerri2016capacity}
the model also includes a state sequence, and the partial decode-compress-and-forward strategy crucially takes advantage of it by optimally adapting its transmission to the dependence of the orthogonal channels on the state sequence. In this paper, we do not consider such a state sequence, and it is not obvious how to adapt our results to the model of \cite{aguerri2016capacity}.

%%%%%%%%%%%%%%%%%%%%%%%%%%%%%%%%%%%%%%%%%%
%% optional
%\supplementary{The following are available online at \linksupplementary{s1}, Figure S1: title, Table S1: title, Video S1: title.}

% Only for the journal Methods and Protocols:
% If you wish to submit a video article, please do so with any other supplementary material.
% \supplementary{The following are available at \linksupplementary{s1}, Figure S1: title, Table S1: title, Video S1: title. A supporting video article is available at doi: link.}

%%%%%%%%%%%%%%%%%%%%%%%%%%%%%%%%%%%%%%%%%%
\section*{Acknowledgment}
M.~Mondelli was supported by an Early Postdoc.Mobility fellowship from the Swiss NSF. R. Urbanke was supported by grant No. 200021\_156672/1 of the Swiss NSF.

%%%%%%%%%%%%%%%%%%%%%%%%%%%%%%%%%%%%%%%%%%
%\acknowledgments{In this section you can acknowledge any support given which is not covered by the author contribution or funding sections. This may include administrative and technical support, or donations in kind (e.g., materials used for experiments).}

%%%%%%%%%%%%%%%%%%%%%%%%%%%%%%%%%%%%%%%%%%
%% optional
%\abbreviations{The following abbreviations are used in this manuscript:\\

%\noindent 
%\begin{tabular}{@{}ll}
%MDPI & Multidisciplinary Digital Publishing Institute\\
%DOAJ & Directory of open access journals\\
%TLA & Three letter acronym\\
%LD & linear dichroism
%\end{tabular}}

\bibliographystyle{IEEEtran}
\bibliography{lth,lthpub}

% Generated by IEEEtran.bst, version: 1.14 (2015/08/26)
\newcommand{\SortNoop}[1]{}
\begin{thebibliography}{10}
\providecommand{\url}[1]{#1}
\csname url@samestyle\endcsname
\providecommand{\newblock}{\relax}
\providecommand{\bibinfo}[2]{#2}
\providecommand{\BIBentrySTDinterwordspacing}{\spaceskip=0pt\relax}
\providecommand{\BIBentryALTinterwordstretchfactor}{4}
\providecommand{\BIBentryALTinterwordspacing}{\spaceskip=\fontdimen2\font plus
\BIBentryALTinterwordstretchfactor\fontdimen3\font minus
  \fontdimen4\font\relax}
\providecommand{\BIBforeignlanguage}[2]{{%
\expandafter\ifx\csname l@#1\endcsname\relax
\typeout{** WARNING: IEEEtran.bst: No hyphenation pattern has been}%
\typeout{** loaded for the language `#1'. Using the pattern for}%
\typeout{** the default language instead.}%
\else
\language=\csname l@#1\endcsname
\fi
#2}}
\providecommand{\BIBdecl}{\relax}
\BIBdecl

\bibitem{Me71}
E.~C. van~der Meulen, ``Three-terminal communication channels,'' \emph{Adv.
  Appl. Prob.}, vol.~3, pp. 120--154, 1971.

\bibitem{CG79}
T.~Cover and A.~E. Gamal, ``Capacity theorems for the relay channel,''
  \emph{IEEE Trans. Inform. Theory}, vol.~25, pp. 572--584, 1979.

\bibitem{ScGa00}
B.~Schein and R.~Gallager, ``The {G}aussian parallel relay network,'' in
  \emph{Proc. of the IEEE Int. Symposium on Inform. Theory}, June 2000, p.~22.

\bibitem{ADT11}
A.~S. Avestimehr, S.~N. Diggavi, and D.~N.~C. Tse, ``Wireless network
  information flow: {A} deterministic approach,'' \emph{IEEE Trans. Inform.
  Theory}, vol.~57, no.~4, pp. 1872--1905, Apr. 2011.

\bibitem{NG11}
B.~Nazer and M.~Gastpar, ``Compute-and-forward: {H}arnessing interference
  through structured codes,'' \emph{IEEE Trans. Inform. Theory}, vol.~57,
  no.~10, pp. 6463--6486, Oct. 2011.

\bibitem{LKGC11}
S.~H. Lim, Y.-H. Kim, A.~E. Gamal, and S.-Y. Chung, ``Noisy network coding,''
  \emph{IEEE Trans. Inform. Theory}, vol.~57, no.~5, pp. 3132--3152, May 2011.

\bibitem{MiLiKi11}
P.~Minero, S.~H. Lim, and Y.-H. Kim, ``A unified approach to hybrid coding,''
  \emph{IEEE Trans. Inform. Theory}, vol.~61, no.~4, pp. 1509--1523, Apr. 2015.

\bibitem{Za05}
S.~Zahedi, ``On reliable communication over relay channels,'' Ph.D.
  dissertation, Stanford University, Stanford, CA, 2005.

\bibitem{GA82}
A.~E. Gamal and M.~Aref, ``The capacity of the semideterministic relay
  channel,'' \emph{IEEE Trans. Inform. Theory}, vol.~28, no.~3, p. 536, May
  1982.

\bibitem{Kim08}
Y.-H. Kim, ``Capacity of a class of deterministic relay channels,'' \emph{IEEE
  Trans. Inform. Theory}, vol.~54, no.~3, pp. 1328--1329, Mar. 2008.

\bibitem{Zh88}
Z.~Zhang, ``Partial converse for a relay channel,'' \emph{IEEE Trans. Inform.
  Theory}, vol.~34, no.~5, pp. 1106--1110, Sept. 1988.

\bibitem{ARY09}
M.~Aleksic, P.~Razaghi, and W.~Yu, ``Capacity of a class of modulo-sum relay
  channels,'' \emph{IEEE Trans. Inform. Theory}, vol.~55, no.~3, pp. 921--930,
  Mar. 2009.

\bibitem{Xue14}
F.~Xue, ``A new upper bound on the capacity of a primitive relay channel based
  on channel simulation,'' \emph{IEEE Trans. Inform. Theory}, vol.~60, no.~8,
  pp. 4786--4798, Aug. 2014.

\bibitem{WOX16}
X.~Wu, A.~{\"O}zg{\"u}r, and L.-L. Xie, ``Improving on the cut-set bound via
  geometric analysis of typical sets,'' \emph{IEEE Trans. Inform. Theory},
  vol.~63, no.~4, pp. 2254--2277, Apr. 2017.

\bibitem{WO16}
X.~Wu and A.~\"Ozg{\"u}r, ``Improving on the cut-set bound for general
  primitive relay channels,'' in \emph{Proc. of the IEEE Int. Symposium on
  Inform. Theory}, Barcelona, Spain, July 2016, pp. 1675--1679.

\bibitem{WO15}
------, ``Cut-set bound is loose for gaussian relay networks,'' in \emph{Proc.
  of the Allerton Conf. on Commun., Control, and Computing}, Monticello, IL,
  USA, Oct. 2015, pp. 1135--1142.

\bibitem{kim:nit}
A.~E. Gamal and Y.-H. Kim, \emph{Network Information Theory}.\hskip 1em plus
  0.5em minus 0.4em\relax Cambridge University Press, 2011.

\bibitem{Kr07}
G.~Kramer, ``Topics in multi-user information theory,'' \emph{Found. Trends
  Commun. Inf. Theory}, vol.~4, no. 4-5, pp. 265--444, Apr. 2007.

\bibitem{Ari09}
E.~{Ar\i kan}, ``Channel polarization: A method for constructing
  capacity-achieving codes for symmetric binary-input memoryless channels,''
  \emph{IEEE Trans. Inform. Theory}, vol.~55, no.~7, pp. 3051--3073, July 2009.

\bibitem{ARTKS10}
M.~Andersson, V.~Rathi, R.~Thobaben, J.~Kliewer, and M.~Skoglund, ``Nested
  polar codes for wiretap and relay channels,'' \emph{IEEE Commun. Lett.},
  vol.~14, no.~8, pp. 752--754, Aug. 2010.

\bibitem{MK12}
M.~Karzand, ``Polar codes for degraded relay channels,'' in \emph{Proc. Intern.
  Zurich Seminar on Comm.}, Feb. 2012, pp. 59--62.

\bibitem{BSTARK12}
R.~Blasco-Serrano, R.~Thobaben, M.~Andersson, V.~Rathi, and M.~Skoglund,
  ``Polar codes for cooperative relaying,'' \emph{IEEE Trans. Commun.},
  vol.~60, no.~11, pp. 3263--3273, Nov. 2012.

\bibitem{KPK14}
D.~Karas, K.~Pappi, and G.~Karagiannidis, ``Smart decode-and-forward relaying
  with polar codes,'' \emph{IEEE Wireless Comm. Letters}, vol.~3, no.~1, pp.
  62--65, Feb. 2014.

\bibitem{Wa15}
L.~Wang, ``Polar coding for relay channels,'' in \emph{Proc. of the IEEE Int.
  Symposium on Inform. Theory}, Hong Kong, June 2015, pp. 1532--1536.

\bibitem{BSC14}
A.~Bennatan, S.~Shamai, and A.~R. Calderbank, ``Soft-decoding-based strategies
  for relay and interference channels: {A}nalysis and achievable rates using
  {LDPC} codes,'' \emph{IEEE Trans. Inform. Theory}, vol.~60, no.~4, pp.
  1977--2009, Apr. 2014.

\bibitem{kall07}
Y.-H. Kim, ``Coding techniques for primitive relay channels,'' in \emph{Proc.
  of the Allerton Conf. on Commun., Control, and Computing}, Monticello, IL,
  USA, Sept. 2007, pp. 26--28.

\bibitem{HRunipol}
S.~H. Hassani and R.~Urbanke, ``Universal polar codes,'' Dec. 2013, [Online].
  Available: http://arxiv.org/abs/1307.7223.

\bibitem{SaV13}
E.~{\c Sa\c so\u glu} and A.~Vardy, ``A new polar coding scheme for strong
  security on wiretap channels,'' in \emph{Proc. of the IEEE Int. Symposium on
  Inform. Theory}, Istanbul, Turkey, July 2013, pp. 1117--1121.

\bibitem{MHSU14broad-ieeeit}
M.~Mondelli, S.~H. Hassani, I.~Sason, and R.~Urbanke, ``Achieving {M}arton's
  region for broadcast channels using polar codes,'' \emph{IEEE Trans. Inform.
  Theory}, vol.~61, no.~2, pp. 783--800, Feb. 2015.

\bibitem{ChBl16}
R.~A. Chou and M.~R. Bloch, ``Polar coding for the broadcast channel with
  confidential messages: A random binning analogy,'' \emph{IEEE Trans. Inform.
  Theory}, vol.~62, no.~5, pp. 2410--2429, May 2016.

\bibitem{MHU14asymm-ieeeit}
M.~Mondelli, S.~H. Hassani, I.~Sason, and R.~Urbanke, ``How to achieve the
  capacity of asymmetric channels,'' \emph{IEEE Trans. Inform. Theory},
  vol.~64, no.~5, pp. 3371--3393, May 2018.

\bibitem{EKLJB16}
E.~E. Gad, Y.~Li, J.~Kliewer, M.~Langberg, A.~Jiang, and J.~Bruck, ``Asymmetric
  error correction and flash-memory rewriting using polar codes,'' \emph{IEEE
  Trans. Inform. Theory}, vol.~62, no.~7, pp. 4024--4038, July 2016.

\bibitem{WeUl16}
Y.-P. Wei and S.~Ulukus, ``Polar coding for the general wiretap channel with
  extensions to multiuser scenarios,'' \emph{IEEE J. Select. Areas Commun.},
  vol.~34, no.~2, pp. 278--291, Feb. 2016.

\bibitem{aguerri2016capacity}
I.~E. Aguerri and D.~G{\"u}nd{\"u}z, ``Capacity of a class of state-dependent
  orthogonal relay channels,'' \emph{IEEE Transactions on Information Theory},
  vol.~62, no.~3, pp. 1280--1295, 2016.

\bibitem{HHM17}
S.-N. Hong, D.~Hui, and I.~Mari\'c, ``Capacity-achieving rate-compatible polar
  codes,'' \emph{IEEE Trans. Inform. Theory}, vol.~63, no.~12, pp. 7620--7632,
  Dec. 2017.

\end{thebibliography}

% The following MDPI journals use author-date citation: Arts, Econometrics, Economies, Genealogy, Humanities, IJFS, JRFM, Laws, Religions, Risks, Social Sciences. For those journals, please follow the formatting guidelines on http://www.mdpi.com/authors/references
% To cite two works by the same author: \citeauthor{ref-journal-1a} (\citeyear{ref-journal-1a}, \citeyear{ref-journal-1b}). This produces: Whittaker (1967, 1975)
% To cite two works by the same author with specific pages: \citeauthor{ref-journal-3a} (\citeyear{ref-journal-3a}, p. 328; \citeyear{ref-journal-3b}, p.475). This produces: Wong (1999, p. 328; 2000, p. 475)

%=====================================
% References, variant B: external bibliography
%=====================================
%\externalbibliography{yes}
%\bibliography{your_external_BibTeX_file}

%%%%%%%%%%%%%%%%%%%%%%%%%%%%%%%%%%%%%%%%%%
%% optional
%\sampleavailability{Samples of the compounds ...... are available from the authors.}

%% for journal Sci
%\reviewreports{\\
%Reviewer 1 comments and authors’ response\\
%Reviewer 2 comments and authors’ response\\
%Reviewer 3 comments and authors’ response
%}

%%%%%%%%%%%%%%%%%%%%%%%%%%%%%%%%%%%%%%%%%%
\end{document}